\newcommand*{\ie}{i.e.\@\xspace}
\newcommand*{\cf}{cf.\@\xspace}
\newcommand*{\etc}{%
    \@ifnextchar{.}%
        {etc}%
        {etc.\@\xspace}%
}
\newcommand{\figref}[1]{Figure~\ref{#1}}
\newcommand{\Figref}[1]{Figure~\ref{#1}}
\def\numberwang{\textsc{Zig-Zag Numberlink}\xspace}
\def\Underline{\setbox0\hbox\bgroup\let\\\endUnderline}
\def\endUnderline{\vphantom{y}\egroup\smash{\underline{\box0}}\\}
\def\|{\verb|}
\newtheorem{theorem}{Theorem}
\newtheorem{definition}{Definition}
\title{Zig-Zag Numberlink is NP-Complete}
\author[1]{Aaron Adcock}
\affil[1]{\small Stanford University\\
Palo Alto, CA, USA\\
\texttt{aadcock@stanford.edu}}
\author[2]{Erik D.~Demaine}
\author[2]{Martin L.~Demaine}
\affil[2]{\small Massachusetts Institute of Technology\\
Cambridge, MA, USA\\
\texttt{\{edemaine,mdemaine\}@mit.edu}}
\author[3]{Michael P.~O'Brien}
\affil[3]{North Carolina State University\\
Raleigh, NC, USA\\
\texttt{\{mpobrie3,blair\_sullivan\}@ncsu.edu}}
\author[4]{Felix Reidl}
\author[4]{Fernando S\'a{}nchez Villaamil}
\affil[4]{\small RWTH Aachen University\\ 
Aachen, Germany\\
\texttt{\{reidl,fernando.sanchez\}@cs.rwth-aachen.de}}
\author[3]{Blair D.~Sullivan}
\begin{document}

\maketitle

\begin{abstract}
  When can $t$ terminal pairs in an $m \times n$ grid be connected by
  $t$ vertex-disjoint paths that cover all vertices of the grid?
  We prove that this problem is NP-complete.
  Our hardness result can be compared to two previous NP-hardness proofs:
  Lynch's 1975 proof without the ``cover all vertices'' constraint,
  and Kotsuma and Takenaga's 2010 proof when the paths are restricted to
  have the fewest possible corners within their homotopy class.
  The latter restriction is a common form of the famous Nikoli puzzle
  \emph{Numberlink}; our problem is another common form of Numberlink,
  sometimes called \emph{Zig-Zag Numberlink} and popularized by the
  smartphone app \emph{Flow Free}.
\end{abstract}

\section{Introduction}
Nikoli is a famous Japanese publisher of pencil-and-paper puzzles,
best known world-wide for its role in popularizing Sudoku puzzles
\cite{SudokuHistory}.  Nikoli in fact publishes whole ranges of such puzzles,
following rules of their own and others' inventions; see \cite{NikoliWeb}.
The meta-puzzle for us theoretical computer scientists is to study the
computational complexity of puzzles, characterizing them as polynomially
solvable, NP-complete, or harder \cite{GamesBook}.
In the case of Nikoli puzzles, nearly every family has been proved NP-complete:
Country Road \cite{YajilinandCountryRoad}, Corral \cite{CorralNP},
Fillomino \cite{ABunch}, Hiroimono \cite{Hiroimono}, Hashiwokakero
\cite{Hashiwokakero}, Heyawake \cite{Heyawake}, Kakuro \cite{Kakuro},
Kurodoko \cite{Kurodoko}, Light Up \cite{LightUp}, Masyu \cite{Masyu},
Nurikabe (\cite{Nurikabe1,Nurikabe2,Nurikabe3}), Shakashaka
\cite{Shakashaka}, Slither Link (\cite{SlitherLink,ABunch}), Sudoku
(\cite{Sudoku1,Sudoku2,ABunch}), Yajilin \cite{YajilinandCountryRoad},
and Yosenabe \cite{Yosenabe}.

In this paper, we study the computational complexity of one Nikoli family of
puzzles called \emph{Numberlink} (also known as Number Link, Nanbarinku,
Arukone, and Flow).
A Numberlink puzzle consists of an $m \times n$ grid of unit squares,
some of which contain numbers, which appear in pairs.
The goal of the player is to connect corresponding pairs of numbers
by paths that turn only at the center of grid squares, do not cross any other
numbered squares, and do not cross any other paths.
Furthermore, in most versions of the puzzle, the paths should together visit
every grid square; in many puzzles, this constraint is in fact forced
by any otherwise valid solution.%
\footnote{This coverage constraint is absent from Nikoli's Numberlink
  website (\url{http://www.nikoli.co.jp/en/puzzles/numberlink.html}),
  but this may simply be an omission.}
Figure~\ref{fig:example_instance} shows a simple example.

\begin{figure}[tb]
  \centering
  \subfloat[Puzzle]{\includegraphics[width=0.35\columnwidth]{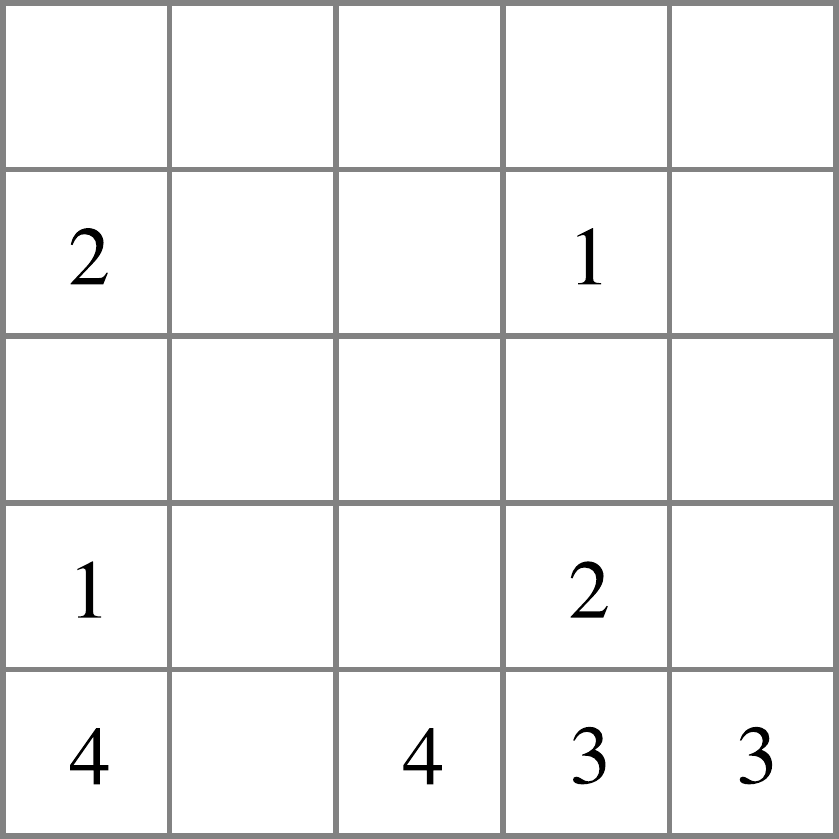}}
  \hspace{30pt}
  \subfloat[Solution]{\includegraphics[width=0.35\columnwidth]{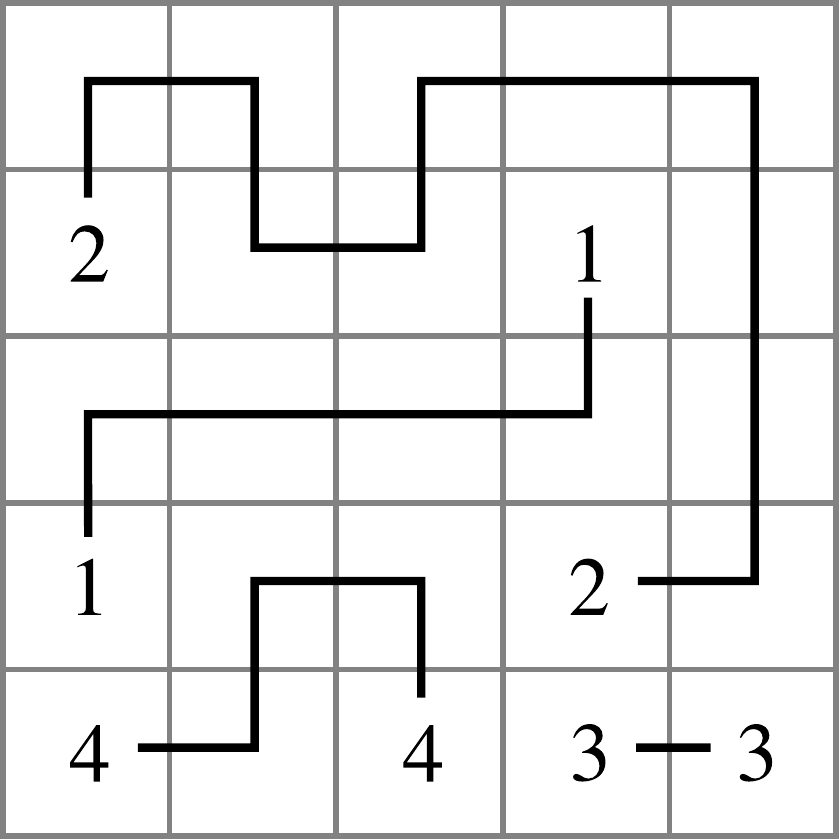}}
  \caption{\label{fig:example_instance}
    Sample $5\times 5$ instance with $4$ terminal pairs.}
\end{figure}

Although popularized and named by Nikoli, the history of Numberlink-style
puzzles is much older \cite{NumberlinkPegg,NumberlinkHistory}.
The earliest known reference is an 1897 column by Sam Loyd \cite{loyd1897}
(most famous for popularizing the 15 Puzzle in 1891 \cite{15puzzle}).
Figure~\ref{SamLoyd} reproduces his puzzle, called ``The Puzzled Neighbors''.
While the puzzle statement does not require visiting every square,
his solution visits most of the squares, and a small modification to his
solution visits all of the squares.
The same puzzle later appeared in Loyd's famous puzzle book \cite{loydbook}.
Another early Numberlink-style puzzle is by Dudeney in his famous 1931 
puzzle book \cite{dudeney1931536}.  His puzzle is $8 \times 8$ with five
terminal pairs, and while the puzzle statement does not require visiting every
square, his solution does.

Some Numberlink puzzles place an additional restriction on paths,
which we call \emph{Classic Numberlink} (following the Numberlink Android app).
Informally stated, restricted paths should not have ``unnecessary'' bends.
Although we have not seen a formal definition, based on several examples,
we interpret this restriction to mean that each path uses the fewest possible
turns among all paths within its homotopy class (i.e., according to which
other obstacles and paths it loops around).
This version of Numberlink has already been shown NP-complete
\cite{NumberlinkNP}.

In this paper, we analyze the unrestricted version, which we call
\emph{Zig-Zag Numberlink} (again following the Numberlink Android app).
Informally, this style of puzzle allows links to zig-zag arbitrarily to fill
the grid.  Personally, we find this formulation of the problem more natural,
given its connection to both vertex-disjoint paths and Hamiltonicity in graphs.
When we only allow for one pair of terminals this problem reduces to finding
a Hamiltonian path in a grid given fixed start and end points, which is known 
to be solvable in polynomial time~\cite{itai1982hamilton}.
Unfortunately, for the case with several terminal pairs, 
as shown in Section~\ref{NP-hardness}, the hardness proof of
\cite{NumberlinkNP} does not (immediately) apply to Zig-Zag Numberlink.
Nonetheless, we construct a very different and intricate NP-hardness proof,
inspired by an early NP-hardness proof from 1975 for vertex-disjoint paths
\cite{Lynch-1975}.

\section{Definitions}

Next we formally define the puzzle \numberwang.

\begin{definition}
  A \emph{board} $B_{m,n}$ is a rectangular grid of $mn$ equal sized
  \emph{squares} arranged into $m$ rows and $n$ columns.  We will
  identify the squares with an ordered pair consisting of their column and row
  position. The top left corner is defined to be position $(1,1)$.
\end{definition}


\begin{definition}
  A \emph{terminal pair} is a pair of distinct squares. An
  \emph{instance of} \numberwang is a tuple $\mathcal{F} =
  (B_{m,n},\mathcal{T})$ where $B_{m,n}$ is a $m \times n$ board and
  $\mathcal{T} = \{(T_{1},T'_{1}),\ldots, (T_t, T'_t)\}$ a set of
  terminal pairs. All terminals are distinct, that is,
  any square of the board may contain at most one terminal.
\end{definition}


\begin{definition} 
  Two squares $(x,y)$ and $(p,q)$ are \emph{adjacent} if either $x = p$ and $|y-q| = 1$, or $y = q$ and $|x-p| = 1.$ 
  A sequence of squares $P=s_{1},\ldots,s_{k}$ is a \emph{path} of length $k$ if $s_i$ is adjacent to $s_{i+1}$ for $i \in [1,k-1]$ and $s_i \neq s_j$ for all $i \neq j$. Two squares are \emph{linked} in $P$ if they appear successively in $P$.
\end{definition}

\begin{figure}[H]
  \centering
  \subfloat[Puzzle]{\includegraphics[angle=-1,width=0.45\columnwidth]{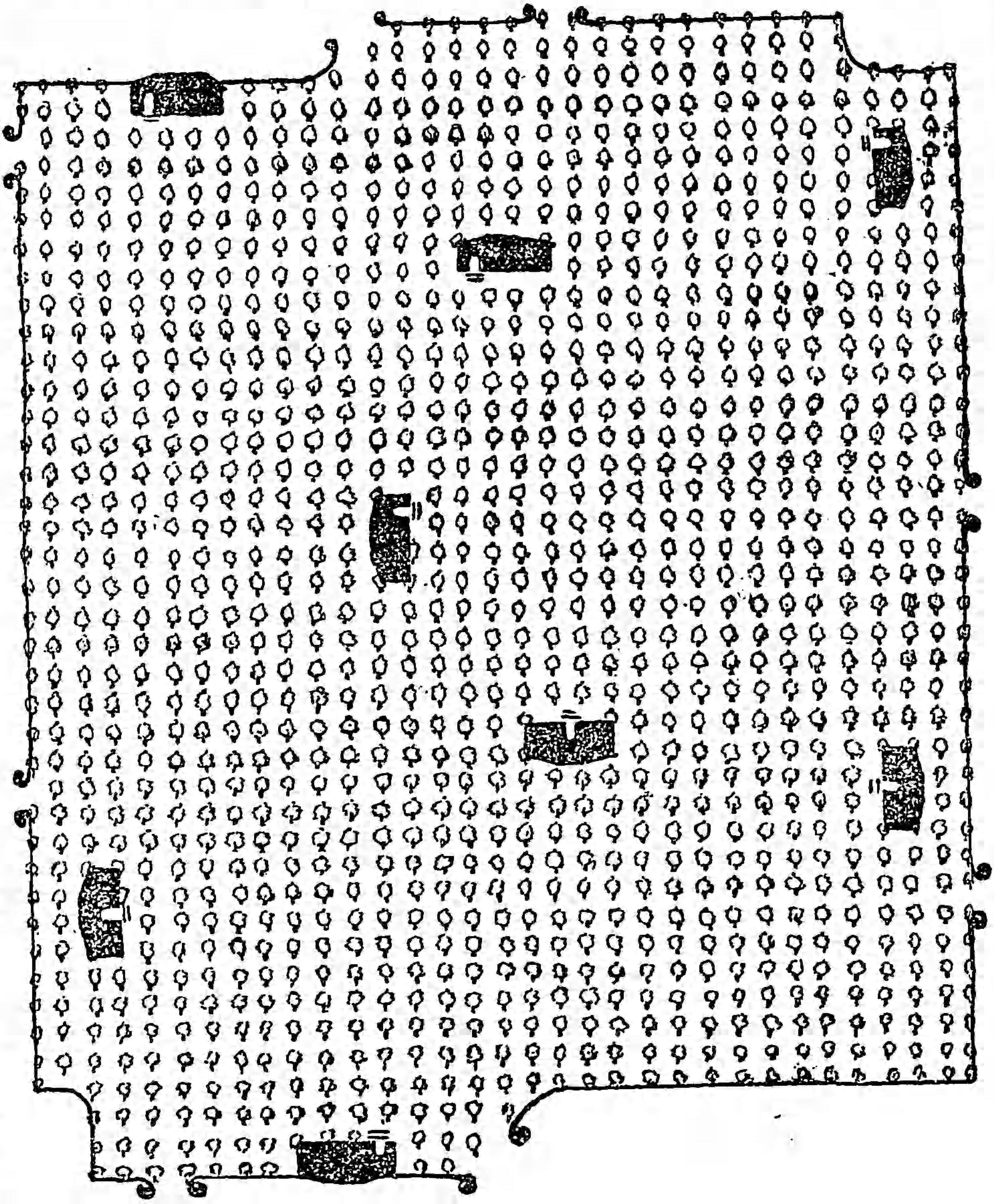}}%
  \hspace{30pt}
  \subfloat[Solution]{\includegraphics[width=0.45\columnwidth]{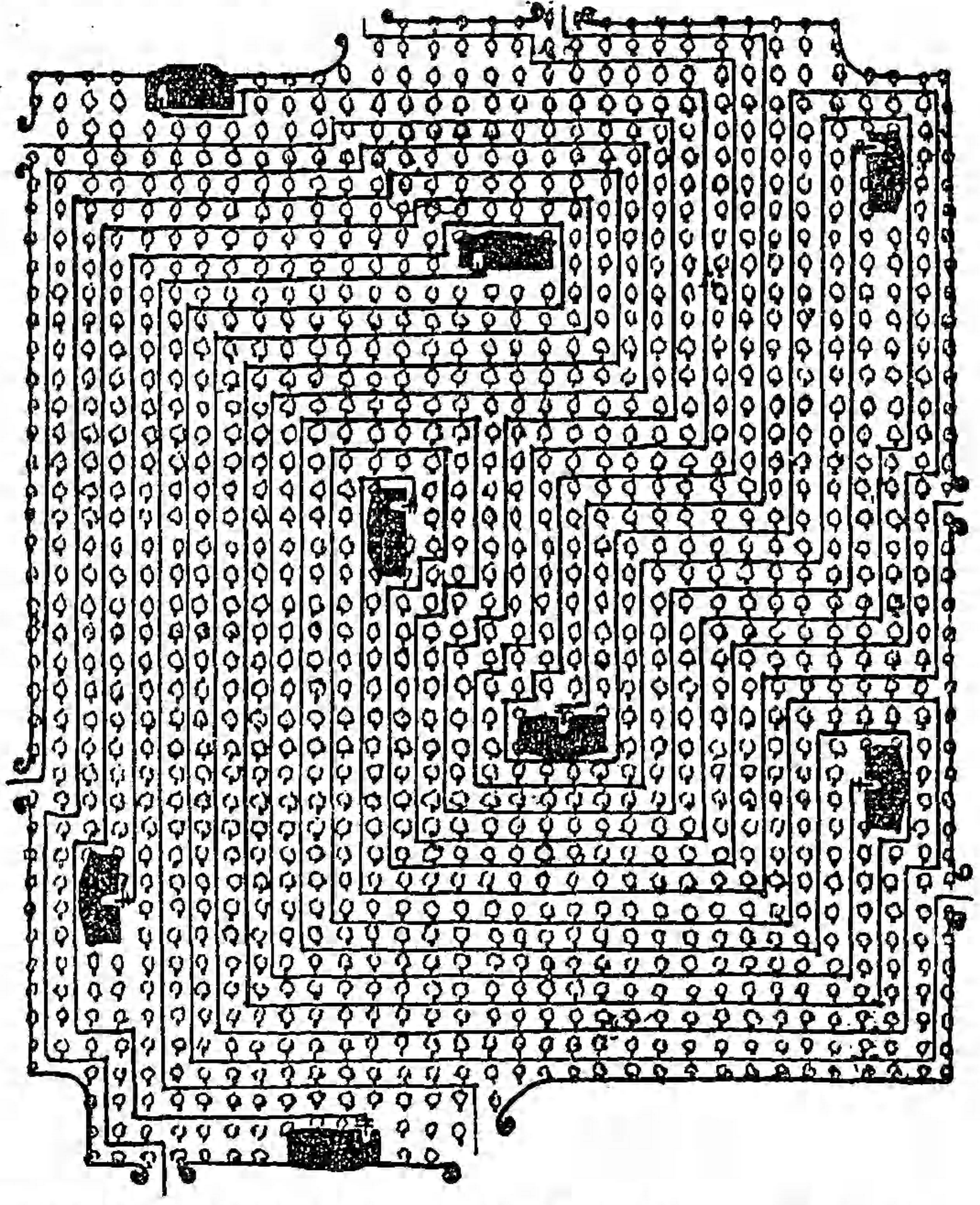}}%
  \caption{\label{SamLoyd}
    Sam Loyd's ``The Puzzled Neighbors'' from 1897 \cite{loyd1897}.
    Each house should be connected by a path to the gate it directly faces,
    by noncrossing paths.  Scans from
    \protect\url{http://bklyn.newspapers.com/image/50475607/} and
    \protect\url{http://bklyn.newspapers.com/image/50475838/}.}
\end{figure}

\begin{definition}
  A \emph{solution} to a \numberwang instance $\mathcal F = (B,\mathcal T)$ is a set
  of paths $\mathcal S = \{P_1,\dots, P_t\}$, where
  $P_i=s_{i,1},\dots,s_{i,k_i}$, so that:
  \begin{enumerate}[(i)]
  \item Every terminal pair $(T_i, T'_i)$ is connected by path $P_i$,
    \ie~$s_{i,1} = T_i$ and $s_{i,k_i} = T'_i$.
  \item Each square in $B$ is contained in exactly one path in $\mathcal S$.
  \end{enumerate}

\end{definition}

We call an instance of \numberwang \emph{solvable} if there exists a solution
and \emph{unsolvable} otherwise. We say that two squares are \emph{linked} by
a solution $S$ if they are linked in some path $P \in \mathcal S$. In the case of $t=1$,
we will abuse the above notation and identify the solution by a single path.

%

Finally, we will talk about \emph{parity}. To illustrate this concept,
we will color the squares alternating black and white as on a checkerboard (\cf~\figref{fig:checkerboard}) and assume that $(1,1)$ is colored black. The important
aspect of parity is that any path of a solution necessarily alternates between
white and black squares. In the construction of our gadgets, parity helps
to ensure that a combination of gadgets still allows a solution. As it will
turn out, parity along with terminal position is crucial to determine
whether instances with only one terminal pair are solvable.

\begin{figure}[tb]
\centering
\begin{minipage}{.48\textwidth}
  \centering
  \vspace*{-30pt}\includegraphics[width=.8\linewidth]{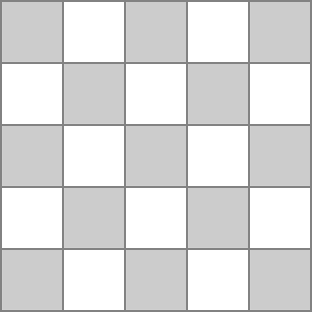}
  \captionof{figure}{Squares colored to illustrate parity.}
  \label{fig:checkerboard}
\end{minipage}%
\hspace{10pt}
\begin{minipage}{.48\textwidth}
  \centering
  \includegraphics[width=.8\linewidth]{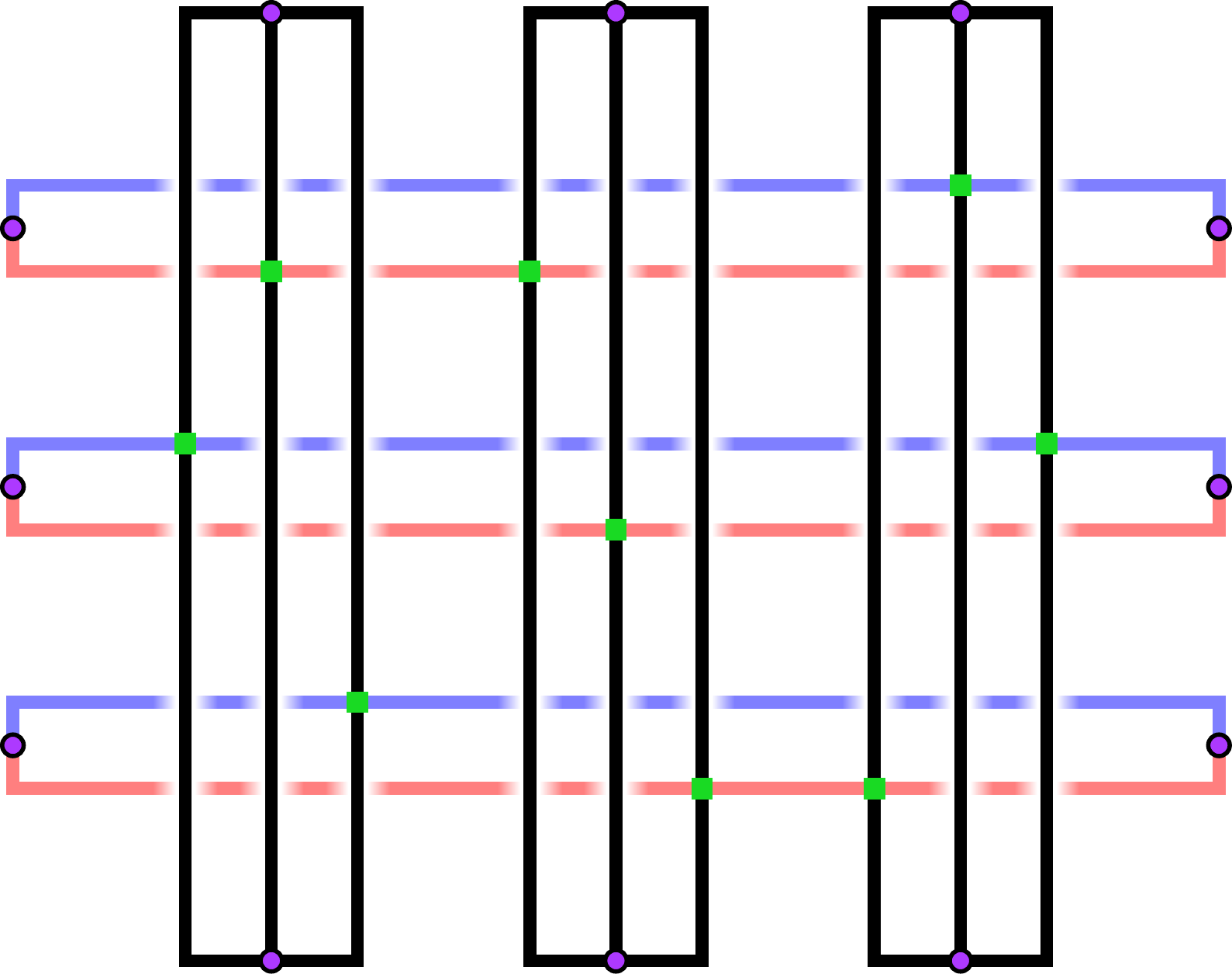}
  \captionof{figure}{High-level sketch of NP-hardness proof: two paths
    per variable, three paths per clause, and crossover gadgets (faded
    underpasses).  Circles indicate terminals; squares indicate actual
    crossings.  Whitespace indicates obstacles where paths cannot go.}
  \label{hardness overview}
\end{minipage}
\end{figure}


\section{NP-hardness}
\label{NP-hardness}
Our NP-hardness reduction for \numberwang mimics the structure of a very early NP-hardness
proof for vertex-disjoint paths among $k$ terminal pairs in grid graphs
by Lynch \cite{Lynch-1975}.  The important differences are that Lynch's
reduction (1)~allows obstacles (untraversable squares), and
(2)~does not require every traversable square to be covered by some path.
The first issue is relatively easy to deal with because terminals serve as
obstacles for all other paths.
For the second issue, we replace all of Lynch's gadgets with more complicated
gadgets to make it possible to cover every square of the grid in all cases.

\begin{theorem}
\numberwang is NP-complete for $k$ terminal pairs in an $n \times n$ square.
\end{theorem}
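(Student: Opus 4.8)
The plan is to prove membership in NP and then NP-hardness by a polynomial-time reduction from $3$-SAT, reusing the global skeleton of Lynch's vertex-disjoint-paths reduction \cite{Lynch-1975} but redesigning every gadget so that solvability coincides with satisfiability \emph{and} full coverage. Membership is immediate: a candidate solution lists at most $mn$ squares per path, and one checks in linear time that consecutive squares are adjacent and distinct, that each path joins its prescribed terminal pair, and that every square lies on exactly one path, matching conditions (i) and (ii) of the definition of a solution.

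For hardness, given a $3$-CNF formula $\varphi$ with variables $x_1,\dots,x_\nu$ and clauses $C_1,\dots,C_\mu$, I would lay out on a large board a row of variable gadgets, a block of clause gadgets, horizontal and vertical wires carrying each literal to the clauses containing it, and a crossover gadget wherever two wires must intersect --- it is this last need that forces us to handle general, nonplanar $\varphi$, exactly as in Lynch. Obstacles (Lynch's untraversable squares) are simulated by short dedicated terminal pairs that occupy and thereby block the cells in question, since a terminal is off-limits to every path but its own. Each variable $x_i$ contributes two paths whose routing through its gadget is geometrically forced into exactly one of two patterns, read as $x_i=\mathrm{true}$ or $x_i=\mathrm{false}$, with the choice transmitted down the outgoing literal wires; each clause $C_j$ contributes three paths, one per literal, and its gadget admits a full-coverage routing if and only if at least one incoming wire carries the satisfying signal, letting the corresponding path sweep the otherwise-unreachable core.

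The essential difficulty --- and what separates this from Lynch --- is constraint (ii): no square may be left unvisited, so every gadget, wire, and crossover must be completely filled in \emph{every} legal configuration, not merely the intended one. My main tool is parity. Coloring the board as a checkerboard, each path alternates colors, so a pocket of leftover squares can be swept by a single path exactly when its black/white balance agrees with the colors of that path's endpoints --- the very condition that decides Hamiltonicity with fixed endpoints in grid graphs \cite{itai1982hamilton}. I would therefore shape each gadget so that, under either forced variable routing and under a satisfied clause, the remaining squares decompose into pockets each meeting this parity condition and hence absorbable by a local zig-zag of a nearby path, while an \emph{unsatisfying} configuration necessarily strands a pocket of the wrong parity and renders the instance unsolvable. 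The crossover is the most delicate piece, since it must pass two independent signals through one another and remain fully coverable for all four signal combinations; I expect engineering its parity to be the hardest step. Finally I would verify both directions of correctness --- a satisfying assignment yields a full-coverage solution, and any solution reads off a satisfying assignment --- and pad the construction to an $n\times n$ square with a trivially coverable border assigned its own terminal pair, completing the reduction.
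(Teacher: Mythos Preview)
Your high-level skeleton matches the paper's almost exactly: both reduce from 3SAT along Lynch's layout, simulate obstacles with short dedicated terminal pairs, and single out the crossover as the crux. Where you diverge is the mechanism for meeting the full-coverage constraint~(ii). The paper does \emph{not} have the main variable or clause paths zig-zag through leftover pockets. Instead, the unit-distance obstacle pairs themselves do the filling: each such pair is nominally connected by a length-one link, but in any given configuration some nearby obstacle pairs are rerouted along longer paths that absorb precisely the squares the main paths leave uncovered. The heart of the proof is then a per-gadget, per-case exhibition (four cases for the crossover, two crossing variants, a split) of such obstacle-pair fillings, given by explicit figures. Parity appears only at the end, and only to guarantee that the ports of adjacent gadgets line up so the split gadget can connect them; it is not the engine of coverage. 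The paper also uses a \emph{single} terminal pair per variable (with two candidate routes) and per clause (with three), rather than the two-per-variable, three-per-clause you sketch.

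Your alternative---letting the main paths sweep pockets certified by a parity test---has a genuine gap in its justification. The Itai--Papadimitriou--Szwarcfiter characterization you invoke applies only to \emph{solid} $m\times n$ rectangles; for general grid subgraphs (which is what your pockets become once obstacles are carved out) Hamiltonian path with prescribed endpoints is NP-complete, and matching black/white counts is necessary but nowhere near sufficient. So parity alone will not tell you that a pocket is sweepable, and you would still owe explicit, configuration-by-configuration constructions of exactly the kind the paper supplies. Your own sentence ``I expect engineering its parity to be the hardest step'' is precisely right: that engineering \emph{is} the proof, and it is currently absent from the proposal.
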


\begin{proof}
  \numberwang is in NP because the solution paths can be expressed in $O(n^2 \log k)$ space
  and checked in the same amount of time.

  To prove NP-hardness, we reduce from 3SAT.
  \Figref{hardness overview} illustrates the high-level picture.
  We construct one terminal pair $(v'_i,v''_i)$ for each variable $v_i$,
  and two candidate paths $V_{i,-},V_{i,+}$ for connecting this pair,
  $V_{i,-}$ representing the false setting and
  $V_{i,+}$ representing the true setting.
  We construct one terminal pair $(c'_j,c''_j)$ for each clause $c_j$,
  and three candidate paths $C_{j,1},C_{j,2},C_{j,3}$ for connecting this pair,
  one per literal in the clause $c_j = c_{j,1} \vee c_{j,2} \vee c_{j,3}$.
  Effectively, for each clause literal $c_{j,k} = \pm v_i$ say,
  the corresponding literal path $C_{j,k}$ intersects just the
  variable path $V_{i,\mp}$ corresponding to the setting
  that does \emph{not} satisfy the clause literal.
  Thus, setting the variable in this way blocks that clause path, and
  a clause must have one of its paths not blocked in this way
  (corresponding to satisfaction).
  It follows that any noncrossing choice of paths corresponds to
  a satisfying assignment of the 3SAT instance.
  
  The reality is more complicated because, in a square grid, all variable paths
  will intersect all clause paths.  However, we can simulate the
  nonintersection of two paths using the \emph{crossover gadget}
  shown in \figref{hardness crossover}.
  The idea is to split the two variable paths $V_{i,-},V_{i,+}$
  into two classes of variable paths, top and bottom,
  with nonintersection of the paths forcing alternation
  between top and bottom classes.
  A variable starting with a top path corresponds to a false setting
  (darker in the figure),
  and starting with a bottom path corresponds to a true setting
  (lighter in the figure).
  The crossover gadget must work when the vertical clause path is either
  present (chosen to satisfy the clause)
  or absent (having chosen a different of the three paths),
  resulting in four total cases.

  We simulate obstacles between paths using many pairs of terminals at unit
  distance (black in the figure).  These \emph{obstacle pairs} can be connected
  by a unit-length edge (as drawn in black) or by a longer path (drawn
  lighter, as a replacement for the black path, though the black path is
  still drawn).  Such longer paths can only prevent choices for the
  variable/clause paths, so any nonintersecting set of paths still solves
  the 3SAT instance.  Furthermore, by connecting the obstacle pairs by the
  lighter longer paths illustrated in the figure, in each of the four cases,
  we can turn any solution to the 3SAT instance into a valid solution to
  \numberwang (which in particular visits every square).\looseness-1%
  \footnote{Our careful and deliberate placement and orientation of obstacle
    pairs to enable such ``filling'' in all cases is the main novelty to
    our proof.  Lynch's crossover gadget has the same nonobstructed paths
    as our \figref{hardness crossover blank}, but as his proof allowed
    obstacles, lacked the complexity of obstacle pairs and the four cases.}

\begin{figure}[H]
  \centering
  \subfloat[Crossing (preventing) true, while still allowing false.]
  {\includegraphics[width=.48\columnwidth]{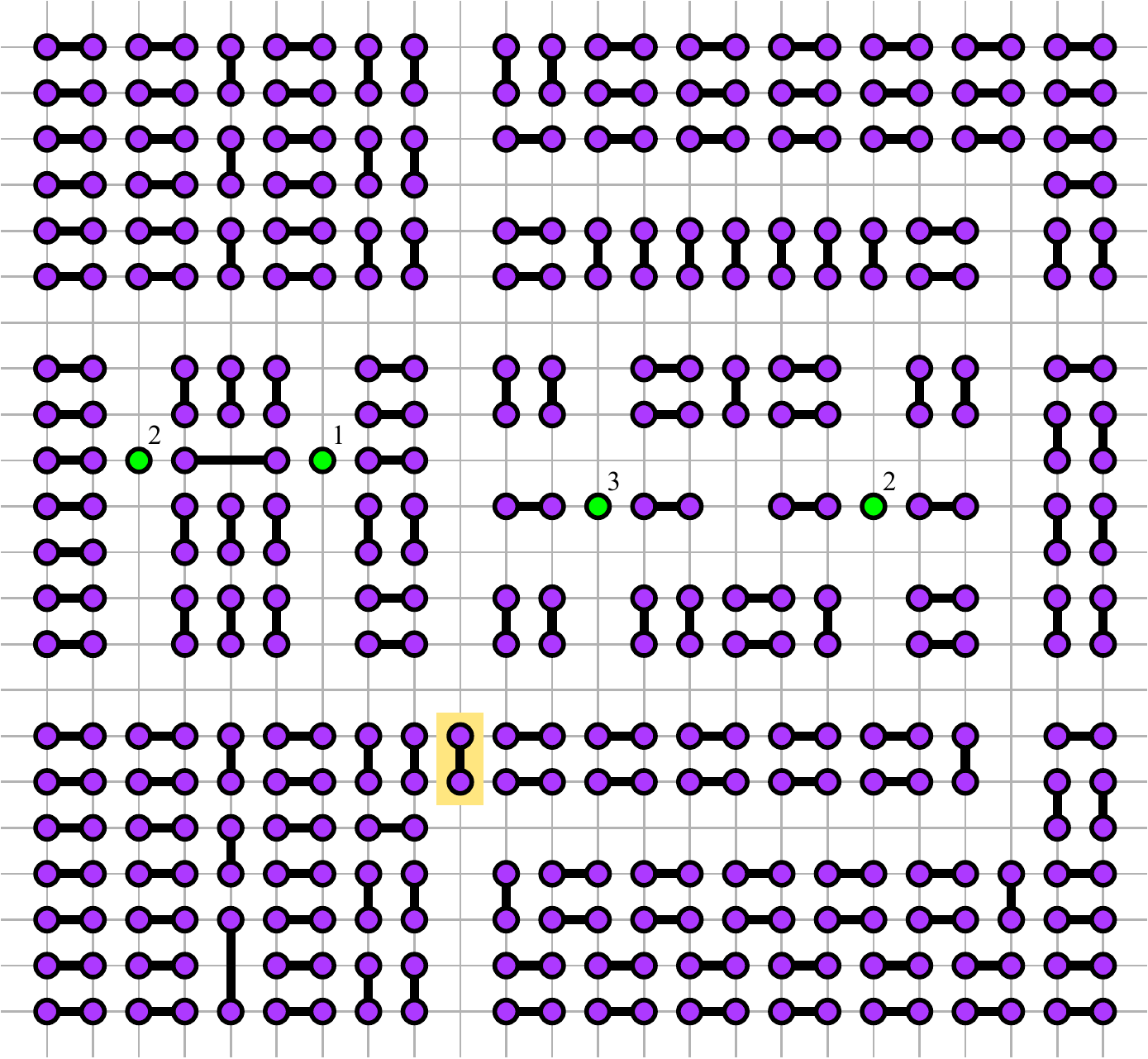}}
  \hfill
  \subfloat[Crossing (preventing) false, while still allowing true.]
  {\includegraphics[width=.48\columnwidth]{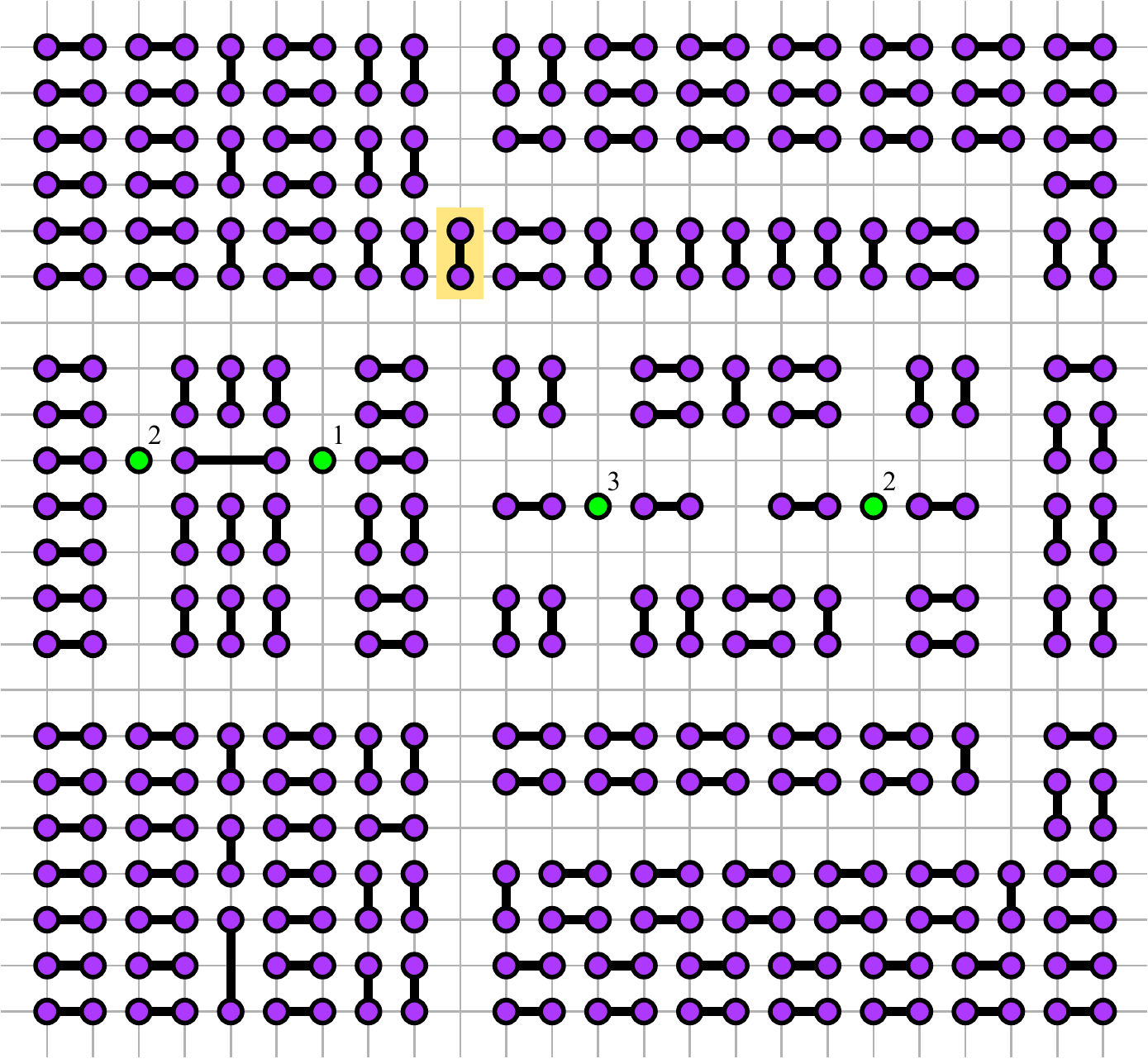}}
  \caption{\label{hardness crossing}
    Crossing gadget, with paths drawn only between obstacle pairs.
    The highlight indicates the unique added obstacle pair.}
\end{figure}

  This crossover gadget necessitates nontrivial \emph{crossing gadgets},
  because the two variable paths in the high-level picture
  (\figref{hardness overview}) have been replaced by alternation
  between two classes of paths.  \Figref{hardness crossing} illustrates
  two gadgets for crossing (preventing) the false and true settings,
  respectively.  Each of these gadgets adds just a single obstacle pair
  to the crossover gadget of \figref{hardness crossover}.
  These obstacles suffice to block the clause path in the prevented variable
  setting, but are consistent with the solutions in
  \figref{hardness crossover} both with and without the clause path,
  so they still allow the other variable setting.
  
  Finally, we can form the two-way branches on the left side $v'_i$
  and the right side $v''_i$, and the three-way branches at the
  top side $c'_j$ and the bottom side $c''_j)$, using the split gadget
  in \figref{hardness split}.  This gadget allows the incoming path
  on the top to exit at either of the two bottom ports, while still covering
  all squares, provided the exit ports both have the same parity
  (color on the checkerboard).
  A key property of the crossover and crossing gadgets (and the reason for
  the strange fifth column) is that they have even numbers of rows and columns.
  As a result, when we build a grid of these gadgets, the top entrance ports
  all have the same parity, as do all the left entrance ports.
  Therefore the split gadgets can correctly connect to the ports on the
  left, right, top, and bottom sides of the grid.
\end{proof}

\Figref{7b erik} illustrates why the NP-hardness proof
for the restricted game \cite{NumberlinkNP} does not immediately
apply to \numberwang.  The illustrated gadget,
from Fig.~7 of \cite{NumberlinkNP}, consists  of
a 1-in-4 SAT clause connected to four wires.
\Figref{7b} shows the intended solution, which has one wire in the
opposite state from the three other wires.
But \figref{erik} shows another possible solution in \numberwang,
where the wires are all in the same state (or two in each state,
depending on the definition of state).
Thus the reduction does not immediately apply to the zig-zag case.
An interesting question is whether the proof (which is rather different
from ours) can be adapted to \numberwang by modifying the gadgets
from~\cite{NumberlinkNP}.

\begin{figure}[H]
  \centering
  \subfloat[Choice 1]{\includegraphics[width=0.45\columnwidth]{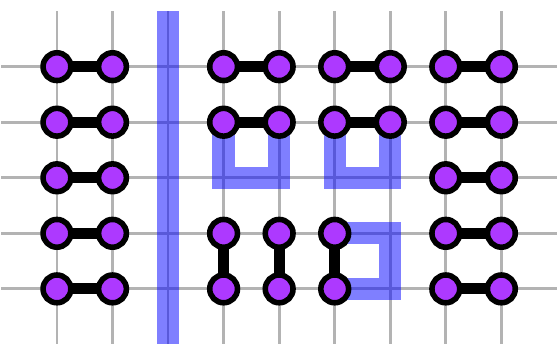}}
  \hfill
  \subfloat[Choice 2]{\includegraphics[width=0.45\columnwidth]{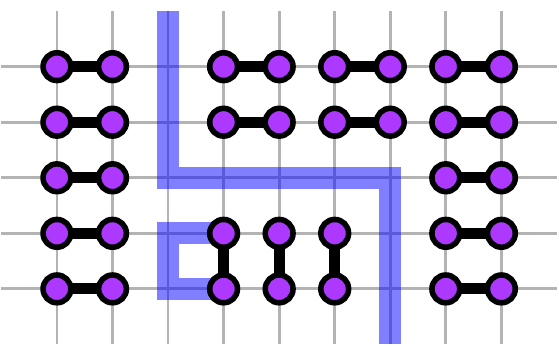}}
  \caption{\label{hardness split}
    Split gadget.}
\end{figure}

\begin{figure}[H]
  \centering
  \subfloat[\label{7b} Intended solution.]{\includegraphics[width=0.45\columnwidth]{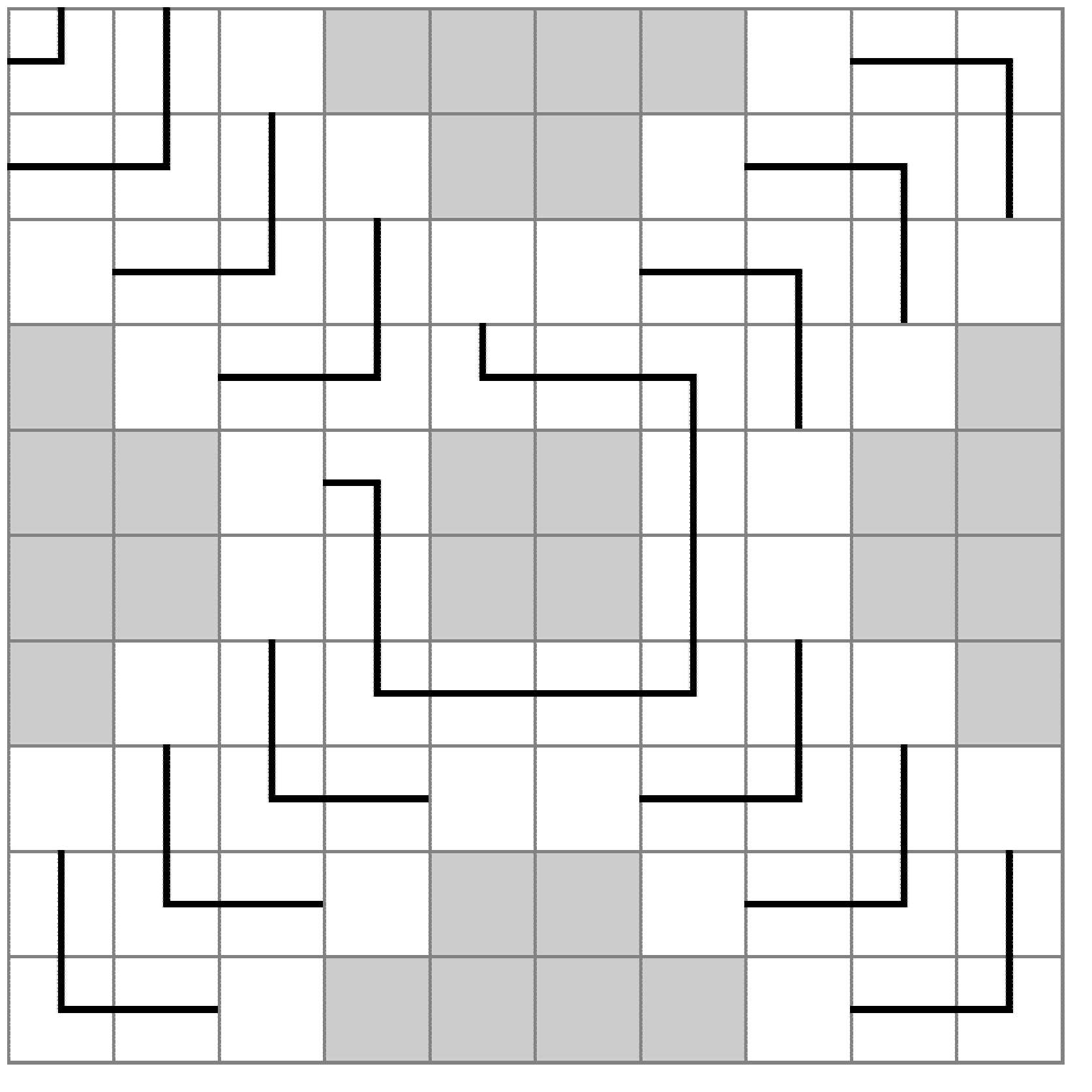}}
  \hfill
  \subfloat[\label{erik} Additional zig-zag solution.]{\includegraphics[width=0.45\columnwidth]{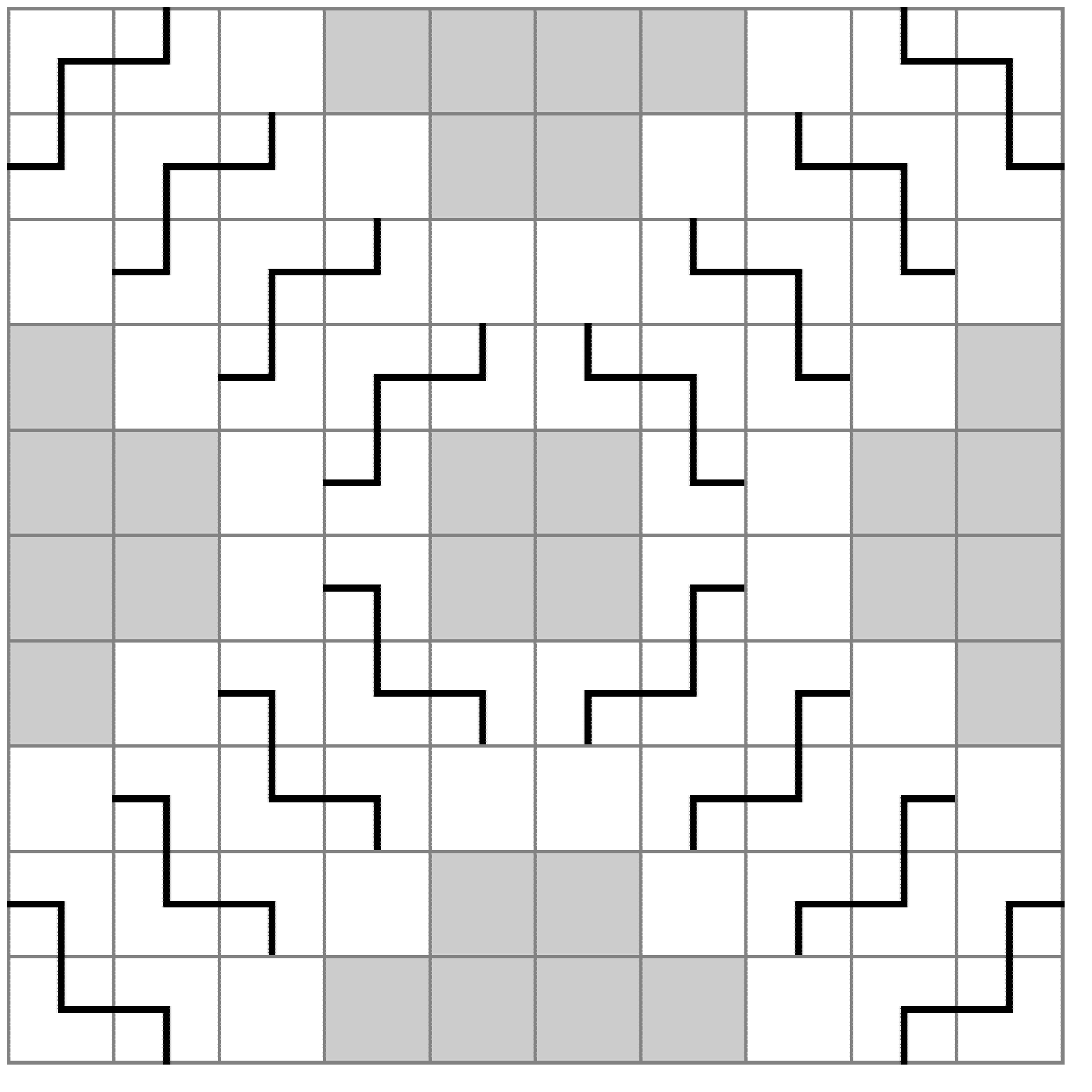}}
  \caption{\label{7b erik}
    The 1-in-4 SAT clause from \cite{NumberlinkNP}
    does not immediately work for \numberwang.
    Shaded regions represent obstacles (made by obstacle pairs).}
\end{figure}



\section{Open Questions}

There is still more fun to be had out of this game.
Is there a polynomial solution to the problem for more than one pair
of terminals?  Our reduction from \textsc{3SAT} creates instances
with a huge number of terminals. Therefore we cannot exclude the
possibility that, for any constant number of terminals, the problem can
be solved in polynomial time. If so, it would be especially interesting
to know whether the problem is fixed parameter tractable, i.e.,
solvable in $f(k) \cdot n^{O(1)}$ time for some function~$f$.
It might be that more sophisticated tools like multiterminal flow algorithms
could help answer these questions.

\begin{figure}[H] 
  \centering
  \subfloat[Crossover gadget with paths drawn only between obstacle pairs.
  \label{hardness crossover blank}]
  {\includegraphics[width=0.48\columnwidth]{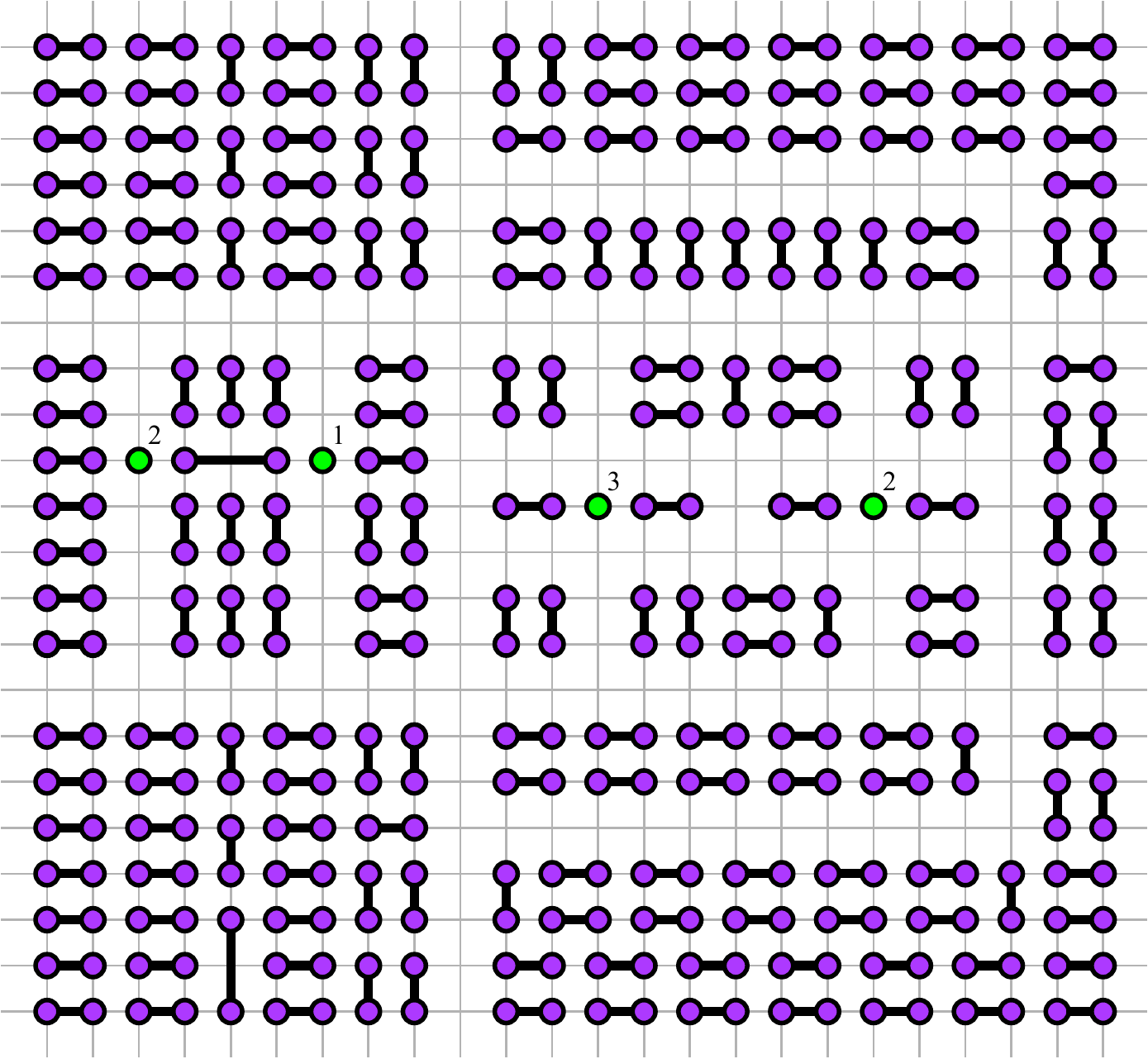}}
  
  \subfloat[False setting with clause path.]
  {\includegraphics[width=0.48\columnwidth]{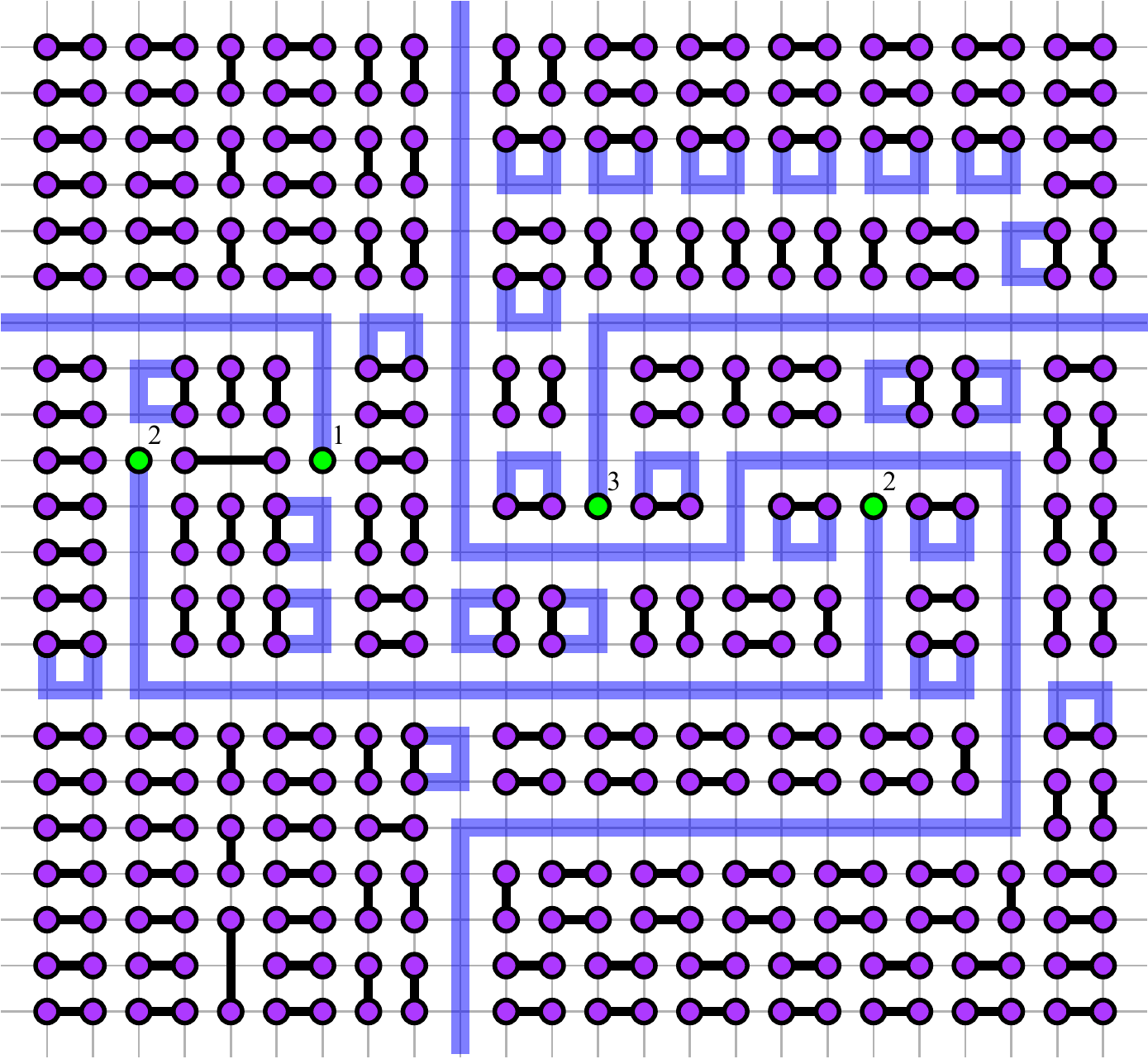}}
  \hfill
  \subfloat[False setting without clause path.]
  {\includegraphics[width=0.48\columnwidth]{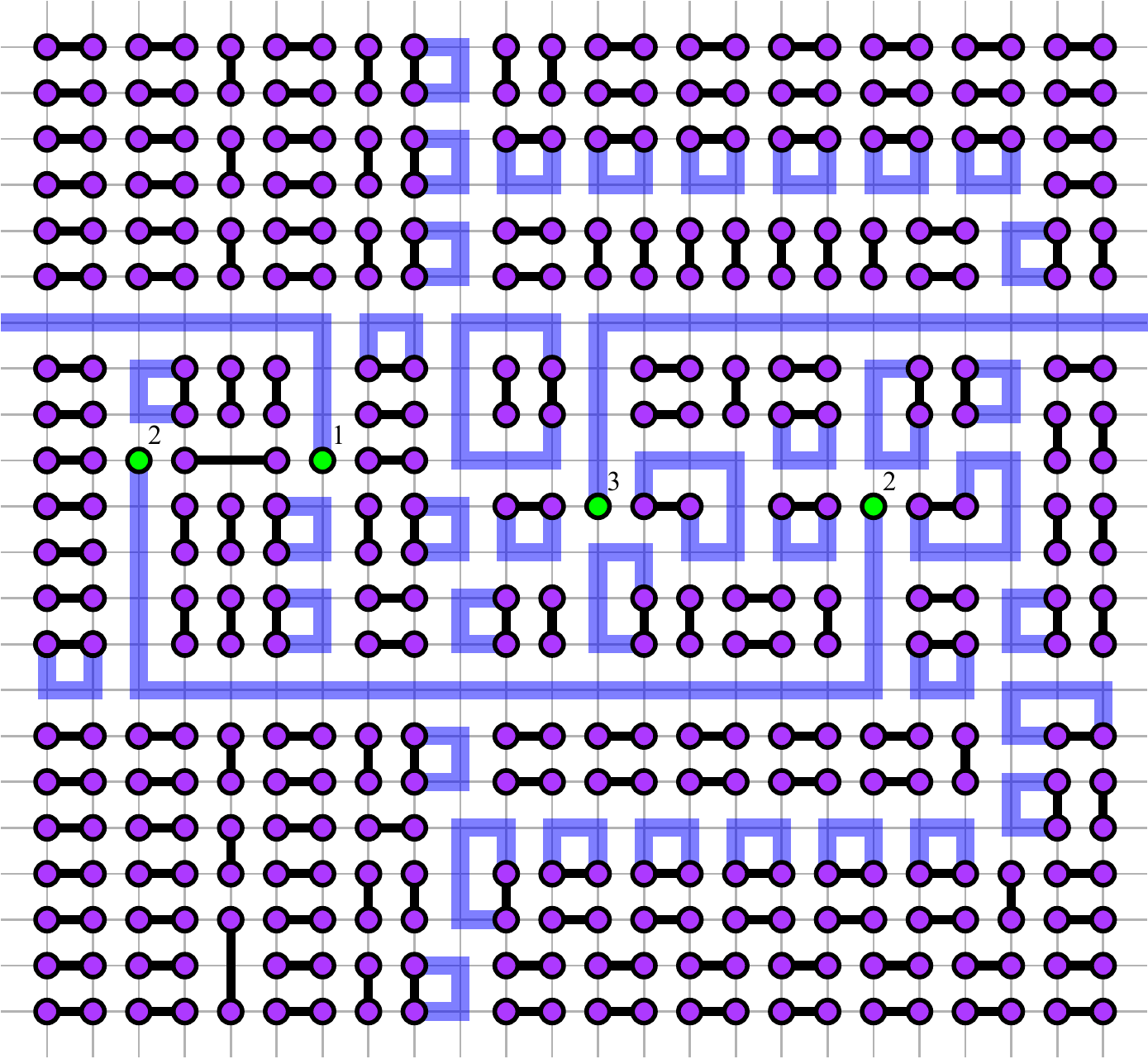}}
  
  \subfloat[True setting with clause path.]
  {\includegraphics[width=0.48\columnwidth]{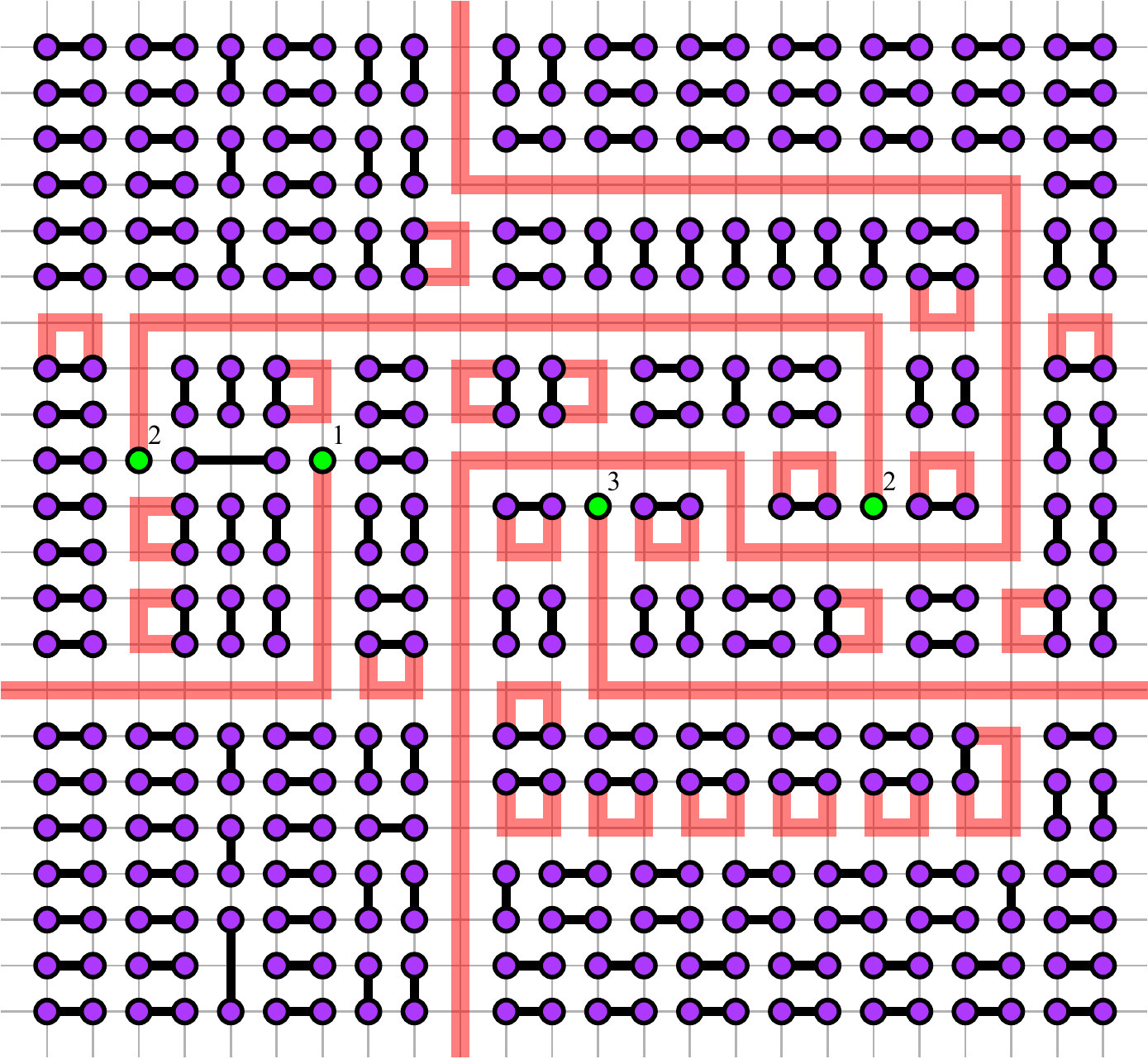}}
  \hfill
  \subfloat[True setting without clause path.]
  {\includegraphics[width=0.48\columnwidth]{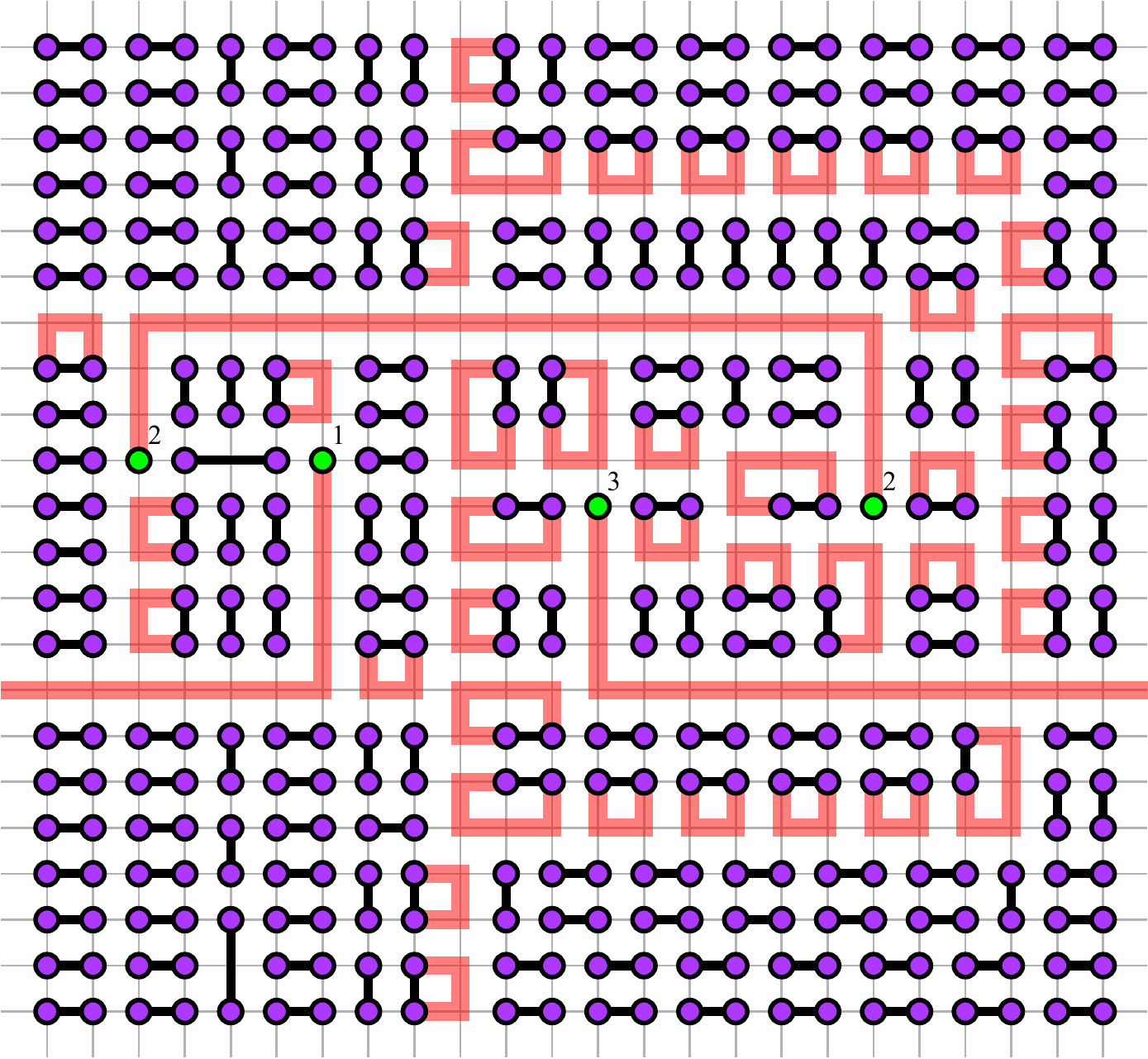}}
  \caption{\label{hardness crossover}
    Crossover gadget.}
\end{figure}

\paragraph{Acknowledgment}
This research began at the ICERM research cluster ``Towards Efficient
Algorithms Exploiting Graph Structure'', co-organized by B. Sullivan,
E. Demaine, and D. Marx in April 2014.  We thank the other participants
for providing a stimulating research environment.

The authors were introduced to Zig-Zag Numberlink through Philip Klein
(one of the participants), who in turn was introduced through his daughter
Simone Klein.  We thank the Kleins for this introduction, without which
we would not have had hours of fun playing the game and
this paper might not have existed.

E. Demaine supported in part by NSF grant CCF-1161626 and
DARPA/AFOSR grant FA9550-12-1-0423.

B. Sullivan supported in part by the National Consortium for Data Science
Faculty Fellows Program and the Defense Advanced Research Projects Agency
under SPAWAR Systems Center, Pacific Grant N66001-14-1-4063. 

Any opinions, findings, and conclusions or recommendations expressed in this publication are
those of the author(s) and do not necessarily reflect the views of DARPA, SSC
Pacific, AFOSR or the NCDS. 
 
\bibliography{./flow-game}



\end{document}